 \definecolor{BLACK}{gray}{0}
 \definecolor{WHITE}{gray}{1}
 \definecolor{RED}{rgb}{1,0,0}
 \definecolor{GREEN}{rgb}{0,1,0}
 \definecolor{BLUE}{rgb}{0,0,1}
 \definecolor{CYAN}{cmyk}{1,0,0,0}
 \definecolor{MAGENTA}{cmyk}{0,1,0,0}
 \definecolor{YELLOW}{cmyk}{0,0,1,0}
\numberwithin{equation}{section}
\numberwithin{figure}{section}
  \theoremstyle{plain}
  \newtheorem{lem}{\protect\lemmaname}
\theoremstyle{plain}
\newtheorem{thm}{\protect\theoremname}
  \providecommand{\lemmaname}{Lemma}
\providecommand{\theoremname}{Theorem}
\begin{document}

\title{Entanglement sharing via qudit channels: Nonmaximally entangled states
may be necessary for one-shot optimal singlet fraction and negativity}

\author{Rajarshi Pal }

\email{rajarshipal4@gmail.com}

\affiliation{Department of Physics, Indian Institute of Technology Madras, Chennai
600036}

\author{Somshubhro Bandyopadhyay}

\email{som@jcbose.ac.in}

\affiliation{Department of Physics and Center for Astroparticle Physics and Space
Science, Bose Institute, EN-80, Sector V, Bidhannagar, Kolkata 700091}
\begin{abstract}
We consider the problem of establishing entangled states of optimal
singlet fraction and negativity between two remote parties for every
use of a noisy quantum channel and trace-preserving LOCC under the
assumption that the parties do not share prior correlations. We show
that for a family of quantum channels in every finite dimension $d\geq3$,
one-shot optimal singlet fraction and entanglement negativity are
attained only with appropriate nonmaximally entangled states. A consequence
of our results is that the ordering of entangled states in all finite
dimensions may not be preserved under trace-preserving LOCC. 
\end{abstract}
\maketitle

\section{Introduction }

In quantum information theory, entangled states \cite{entanglement-schrodinger,Entanglement-horodecki}
shared between remote parties are considered as resources \cite{Entanglement-horodecki}
within the paradigm of local operations and classical communication
(LOCC) (see for example, \cite{Chitambar-LOCC}). However, any protocol
of entanglement sharing requires sending quantum systems over quantum
channels along with local processing irrespective of preshared correlations
that may be present between the parties \cite{Bennett-1996,Bennett-purification-1996,Deutsch-1996,Horodecki-1997,Horodecki-general-teleportation-1999,Schumacher-1996,Streltsov-etal-2015,Zuppardo-et-al-2016}.
It may be noted that recent results \cite{Streltsov-etal-2015} strongly
suggest that protocols with prior correlations may not provide any
efficiency advantage over the ones without correlations. 

In this paper we consider the basic protocol between two remote parties,
Alice and Bob, who do not share any prior correlation. Such a protocol
may be described as follows. Alice locally prepares a pure quantum
state $\left|\psi\right\rangle \in\mathbb{C}^{d}\otimes\mathbb{C}^{d}$
and sends half of it to Bob down a $d$-dimensional quantum channel
$\Lambda$. In an ideal scenario where the channel is taken to be
noiseless, maximally entangled states are easily established this
way. For a noisy channel, which is typically the case, Alice and Bob
end up with a mixed state $\rho_{\psi,\Lambda}=\left(\mathcal{I}\otimes\Lambda\right)\rho_{\psi}$
where $\rho_{\psi}=\left|\psi\right\rangle \left\langle \psi\right|$
or an ensemble for many uses of the channel. Thus in a noisy channel
scenario the goal is to establish entangled states that are optimal
with respect to some well-defined figure of merit. Entanglement distillation
\cite{Bennett-1996,Bennett-purification-1996,Deutsch-1996,Horodecki-1997}
provides a solution by converting many copies of $\rho_{\psi,\Lambda}$
to fewer near-perfect entangled states thereby requiring many uses
of the channel and joint measurements. 

The present paper considers a one-shot instance of the entanglement
sharing problem where the goal is to establish entangled states of
maximum singlet fraction and entanglement negativity \cite{Negativity}
achievable for every single use of the channel (see for example, \cite{BG-2012,PBG-2014}).
As we will see, the one-shot optimal values of these two quantities
are closely related and exhibit similar properties. 

The singlet fraction (or maximally entangled fraction) \cite{Bennett-1996,Deutsch-1996,Bennett-purification-1996,Verstrate-fidelity-2002,Horodecki-general-teleportation-1999}
of $\rho_{\psi,\Lambda}$ is given by 
\begin{eqnarray}
\mathbb{F}\left(\rho_{\psi,\Lambda}\right) & = & \max_{\vert\Phi\rangle}\langle\Phi\vert\rho_{\psi,\Lambda}\vert\Phi\rangle\label{fidelity-1}
\end{eqnarray}
where $|\Phi\rangle$ is a maximally entangled state in $\mathbb{C}^{d}\otimes\mathbb{C}^{d}$.
The motivation behind choosing singlet fraction as our figure of merit
lies in the fact that singlet fraction is an effective measure of
usefulness of the state $\rho_{\psi,\Lambda}$ for quantum information
processing tasks, e.g. quantum teleportation \cite{Teleportation},
superdense coding \cite{dense-coding-1}, quantum key distribution
\cite{crypto-ekert}, and distributed computation \cite{distributed-computation},
which typically require entangled states of very high $\mathbb{F}$,
ideally close to unity. It is useful to note that the yield in a distillation
protocol depends on $\mathbb{F}\left(\rho_{\psi,\Lambda}\right)$;
in fact, for the distillation protocols to work the singlet fraction
of the mixed states must exceed a certain threshold value \cite{Bennett-1996,Deutsch-1996,Bennett-purification-1996}. 

While one may suppose that maximizing $\mathbb{F}\left(\rho_{\psi,\Lambda}\right)$
given by Eq.$\,$(\ref{fidelity-1}) over all transmitted states $\vert\psi\rangle$
will yield the desired result, such a supposition may be unfounded.
This is because singlet fraction of a state can increase under local
trace-preserving operations (TP-LOCC) \cite{Badziag-2000,Bandyopadhyay-2002,Verstrate-fidelity-2002}
which strongly suggests that in a one-shot protocol local post-processing
may be required to attain the optimal value. Taking this into account,
let $\mbox{\ensuremath{\rho}}_{\psi,\Lambda}^{{\rm L}}={\rm L}\left(\rho_{\psi,\Lambda}\right)$
denote the density matrix under the action of some TP-LOCC operation
${\rm L}$ on $\rho_{\psi,\Lambda}$. Then, for a fixed transmitted
state $\left|\psi\right\rangle $, the maximum achievable singlet
fraction is defined as \cite{Verstrate-fidelity-2002}
\begin{eqnarray}
\mathbb{F}^{*}\left(\rho_{\psi,\Lambda}\right) & = & \max_{{\rm L\in TP-LOCC}}\mathbb{F}\left(\rho_{\psi,\Lambda}^{{\rm L}}\right),\label{max-achievable-fidelity-1}
\end{eqnarray}
where the maximization is over all TP-LOCC ${\rm L}$. Note that,
unlike $\mathbb{F}$ which can increase under TP-LOCC, $\mathbb{F}^{*}$
is a LOCC monotone \cite{Verstrate-fidelity-2002}. It is important
to note that the action of optimal TP-LOCC, say, ${\rm L}^{*}$ on
$\rho_{\psi,\Lambda}$ results in a density matrix, say $\rho_{\psi,\Lambda}^{\prime}={\rm L}^{*}\left(\rho_{\psi,\Lambda}\right)$.
Thus, we can write 
\begin{eqnarray}
\mathbb{F}^{*}\left(\rho_{\psi,\Lambda}\right) & = & \mathbb{F}\left(\rho_{\psi,\Lambda}^{\prime}\right).\label{F*=00003DFrho*}
\end{eqnarray}
The one-shot optimal singlet fraction for the channel $\Lambda$ is
defined as \cite{PBG-2014} 
\begin{eqnarray}
\mathbb{F}\left(\Lambda\right) & = & \max_{\left|\psi\right\rangle }\mathbb{F}^{*}\left(\rho_{\psi,\Lambda}\right),\label{optimal-fidelity}
\end{eqnarray}
where the maximum is taken over all pure state transmissions. Let
us now suppose that $\left|\psi_{{\rm opt}}\right\rangle $ is a pure
entangled state such that (\ref{optimal-fidelity}) holds; then, 
\begin{eqnarray}
\mathbb{F}\left(\Lambda\right) & = & \mathbb{F}^{*}\left(\rho_{\psi_{{\rm opt}},\Lambda}\right)=\mathbb{F}\left(\rho_{\psi_{{\rm opt}},\Lambda}^{\prime}\right).\label{optimal-purity-one}
\end{eqnarray}
The one-shot optimal singlet fraction is related to optimal negativity
in the following way. For any two-qudit density matrix $\sigma$ the
following inequality holds \cite{VH-2003}: 
\begin{eqnarray}
\mathbb{F}^{*}\left(\sigma\right) & \leq & \frac{1+2\mathcal{N}\left(\sigma\right)}{d}\label{F*<=00003Dnegativity}
\end{eqnarray}
where $\mathcal{N}\left(\sigma\right)$ denotes the negativity \cite{Negativity}
of the state $\sigma$. Now, substituting $\sigma$ by $\rho_{\psi,\Lambda}$
in the above inequality and maximizing over all transmitted states
$\left|\psi\right\rangle $ leads to an upper bound on $\mathbb{F}\left(\Lambda\right)$:
\begin{eqnarray}
\mathbb{F}\left(\Lambda\right) & \leq & \frac{1+2\mathcal{N}\left(\rho_{\psi_{{\rm opt}},\Lambda}\right)}{d}\leq\frac{1+2\mathcal{N}\left(\Lambda\right)}{d},\label{upper bound on optimal SF}
\end{eqnarray}
where $\mathcal{N}\left(\Lambda\right)=\max_{\left|\psi\right\rangle }\mathcal{N}\left(\rho_{\psi,\Lambda}\right)$
is the optimal negativity. 

Thus given a quantum channel $\Lambda$, the task is to find $\mathbb{F}\left(\Lambda\right)$
and $\mathcal{N}\left(\Lambda\right)$ and the protocols to achieve
these optimal values. Note that, it is quite possible that the optimal
values may be attained by sending different pure states. However,
the question that deserves utmost importance is whether the optimal
states are maximally entangled like noiseless channels. 

To the best of our knowledge, the problem concerning one-shot optimal
singlet fraction has been completely solved only in the qubit case
\cite{PBG-2014}. In particular, for any qubit channel (which is not
entanglement breaking), it was shown that $\left|\psi_{{\rm opt}}\right\rangle $,
satisfying (\ref{optimal-purity-one}) is maximally entangled if and
only if the channel is unital, and for any non-unital qubit channel
$\left|\psi_{{\rm opt}}\right\rangle $ is necessarily nonmaximally
entangled (for the specific case of amplitude damping channel; see
\cite{BG-2012}). Further, it was shown that for any qubit channel
$\Lambda_{{\rm qubit}}$, $\mathbb{F}\left(\Lambda_{{\rm qubit}}\right)$
can be exactly computed and is given by \cite{PBG-2014} 
\begin{eqnarray}
\mathbb{F}\left(\Lambda_{{\rm qubit}}\right) & = & \frac{1+2\mathcal{N}\left(\rho_{\Phi^{+},\Lambda_{{\rm qubit}}}\right)}{2}\label{F=00003Dnegatvity}
\end{eqnarray}
where $\left|\Phi^{+}\right\rangle =\frac{1}{\sqrt{2}}\left(\left|00\right\rangle +\left|11\right\rangle \right)$. 

In \cite{Ziman-Buzek-2005,Ziman-Buzek-2007}, specific examples were
given which showed that the ordering of entangled states may change
under one-sided local action of a qubit channel and the maximum output
entanglement may not be achieved for an input maximally entangled
state {[}shown for a system of four qubits having configuration $\left(\mathbb{C}^{2}\otimes\mathbb{C}^{2}\right)\otimes\left(\mathbb{C}^{2}\otimes\mathbb{C}^{2}\right)${]}.
A more systematic way supporting these observations can be found in
\cite{BG-2012,PBG-2014,Streltsov-etal-2015}. For example, in \cite{PBG-2014}
it was pointed out that for qubit channels, the maximum achievable
negativity may not be achieved by sending a maximally entangled state:
Using (\ref{F=00003Dnegatvity}) and (\ref{upper bound on optimal SF})
we see that 
\begin{equation}
\mathcal{N}\left(\rho_{\Phi^{+},\Lambda_{{\rm qubit}}}\right)\leq\mathcal{N}\left(\rho_{\psi_{{\rm opt}},\Lambda_{{\rm qubit}}}\right)\leq\mathcal{N}\left(\Lambda_{{\rm qubit}}\right)\label{inequality-Ngeativity}
\end{equation}
 Since $\left|\psi_{{\rm opt}}\right\rangle $ is nonmaximally entangled
for non-unital channels, the inequality implies that nonmaximally
entangled states also lead to maximum achievable entanglement negativity;
for an amplitude damping channel the inequality (\ref{inequality-Ngeativity})
is strict \cite{PBG-2014}. The question for other nonunital channels,
however, remains open. 

In this paper we extend our previous studies \cite{BG-2012,PBG-2014}
to higher dimensional quantum channels. In particular, we wish to
know whether we can find quantum channels in all higher dimensions
$d\geq3$ with properties similar to non-unital qubit channels. The
main results of this paper are the following. 
\begin{itemize}
\item We present a family of quantum channels $\Omega$ in every finite
dimension $d\geq3$ for which we prove that $\left|\psi_{{\rm opt}}\right\rangle $
is nonmaximally entangled. Although we are not able to provide an
expression for this optimal state, nonetheless, we obtain a nonmaximally
entangled state $\left|\psi^{\prime}\right\rangle \in\mathbb{C}^{d}\otimes\mathbb{C}^{d}$
satisfying the inequality: 
\begin{equation}
\mathbb{F}\left(\Omega\right)=\mathbb{F}^{*}\left(\rho_{\psi_{{\rm opt}},\Omega}\right)\geq\mathbb{F}\left(\rho_{\psi^{\prime},\Omega}\right)>\mathbb{F}^{*}\left(\rho_{\Phi,\Omega}\right)\label{main-result-inequality}
\end{equation}
where $\left|\psi^{\prime}\right\rangle $ is the eigenvector corresponding
to the largest eigenvalue of the density matrix $\rho_{\Phi^{+},\hat{\Omega}}$
with $\hat{\Omega}$ being the dual map (see the next section for
the definition) and $\left|\Phi\right\rangle \in\mathbb{C}^{d}\otimes\mathbb{C}^{d}$
being any maximally entangled state. Note that the first inequality
gives us a lower bound on the one-shot optimal singlet fraction and
shows that suitable nonmaximally entangled states are better than
maximally entangled states. Also note that, since $\mathbb{F}^{*}$
is a LOCC monotone, the above inequality together with the identity
(\ref{optimal-purity-one}) provides a constructive way to demonstrate
that ordering of $\mathbb{F}^{*}$ in general is not preserved under
TP-LOCC in all finite dimensions. 
\item Optimal negativity is attained only by appropriate nonmaximally entangled
states. Using (\ref{F*<=00003Dnegativity}), (\ref{optimal-purity-one})
and (\ref{eF(Omega)>Nagativity}) it is easy to see that 
\begin{eqnarray}
\mathcal{N}\left(\Omega\right)\geq\mathcal{N}\left(\rho_{\psi_{{\rm opt}},\Omega}\right) & > & \mathcal{N}\left(\rho_{\Phi^{+},\Omega}\right)\label{N(psi-opt)>N(maxentangled)}
\end{eqnarray}
where $\left|\psi_{{\rm opt}}\right\rangle $ is nonmaximally entangled.
Thus, in all finite dimensions $d\geq3$ we are able to show by explicit
construction that the maximum output entanglement, as measured by
negativity, is not always achieved using a maximally entangled input
state. This, significantly improves upon the previously known examples. 
\end{itemize}
We also make the following observation. We find that in higher dimensions
an expression analogous to (\ref{F=00003Dnegatvity}) does not hold
in general. This follows from inequality (see the proof of \ref{main-result-inequality}):
\begin{eqnarray}
\mathbb{F}\left(\Omega\right) & > & \frac{1+2\mathcal{N}\left(\rho_{\Phi^{+},\Omega}\right)}{d}\label{eF(Omega)>Nagativity}
\end{eqnarray}
where $\mathcal{N}\left(\rho_{\Phi^{+},\Omega}\right)$ is the negativity
of the density matrix $\rho_{\Phi^{+},\Omega}$. One may argue that
there is no convincing reason why one should have expected the generalization
to hold in the first place; however, the exact formula obtained in
\cite{PBG-2014} prompted us to think such a generalization, if it
holds, would give us a computable formula for one-shot optimal singlet
fraction in all finite dimensions. Unfortunately, our optimism turned
out to be misplaced.

\section{Results}

A quantum channel $\Lambda$ is a trace preserving completely positive
map characterized by a set of Kraus operators $\left\{ A_{i}\right\} $
satisfying $\sum A_{i}^{\dagger}A_{i}=\mathcal{I}$ (see for example,
\cite{Schumacher-1996}). The dual map $\hat{\Lambda}$, described
in terms of the Kraus operators $\left\{ A_{i}^{\dagger}\right\} $,
is the adjoint map with respect to the Hilbert-Schmidt inner product.
We say that a channel $\Lambda$ is \emph{unital} if its action preserves
the Identity: $\Lambda\left(\mathcal{I}\right)=\mathcal{I}$, and
\emph{nonunital} if it does not, i.e., $\Lambda\left(\mathcal{I}\right)\neq\mathcal{I}$.
Moreover, the dual map $\hat{\Lambda}$ is trace-preserving, and hence
a channel, iff $\Lambda$ is unital. The one-sided action of a $d$-dimensional
map $\$\in\left\{ \Lambda,\hat{\Lambda}\right\} $ on a pure state
$\vert\psi\rangle\in\mathbb{C}^{d}\otimes\mathbb{C}^{d}$ gives rise
to a mixed state which can be conveniently expressed as: 
\begin{eqnarray*}
\rho_{\psi,\$} & = & \left(\mathcal{I}\otimes\$\right)\rho_{\psi}\\
 & = & \sum_{i}\left(\mathcal{I}\otimes K_{i}\right)\rho_{\psi}\left(\mathcal{I}\otimes K_{i}^{\dagger}\right)
\end{eqnarray*}
where the Kraus operators $\left\{ K_{i}\right\} $ describe the channel
$\$$ and $\rho_{\psi}=\left|\psi\right\rangle \left\langle \psi\right|$
is the density matrix corresponding to the pure state $\left|\psi\right\rangle $.
We now give two useful lemmas which are applicable to any quantum
channel $\Lambda$. The first lemma was proved in \cite{Ziman-Buzek-2005}. 
\begin{lem}
\label{F*phi=00003DF*phi+} For a $d$-dimensional quantum channel
$\Lambda$, $\mathbb{F}^{*}\left(\rho_{\Phi,\Lambda}\right)=\mathbb{F}^{*}\left(\rho_{\Phi^{+},\Lambda}\right)$
where $\left|\Phi^{+}\right\rangle =\frac{1}{\sqrt{d}}\sum_{i=0}^{d-1}\left|ii\right\rangle $
and $\left|\Phi\right\rangle $ is any maximally entangled state in
$\mathbb{C}^{d}\otimes\mathbb{C}^{d}$. 
\end{lem}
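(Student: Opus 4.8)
The plan is to exploit the fact that any maximally entangled state $\left|\Phi\right\rangle \in\mathbb{C}^{d}\otimes\mathbb{C}^{d}$ is related to the canonical one $\left|\Phi^{+}\right\rangle =\frac{1}{\sqrt{d}}\sum_{i}\left|ii\right\rangle$ by a local unitary acting on one side only. Concretely, writing $\left|\Phi\right\rangle$ in its Schmidt form and absorbing the Schmidt basis change, there is a unitary $U$ on $\mathbb{C}^{d}$ such that $\left|\Phi\right\rangle = (\mathcal{I}\otimes U)\left|\Phi^{+}\right\rangle$ (using the standard transpose trick, $(\mathcal{I}\otimes U)\left|\Phi^{+}\right\rangle = (U^{T}\otimes \mathcal{I})\left|\Phi^{+}\right\rangle$, so the choice of which side carries the unitary is immaterial). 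The first step is therefore to record this representation and hence write $\rho_{\Phi} = (\mathcal{I}\otimes U)\,\rho_{\Phi^{+}}\,(\mathcal{I}\otimes U^{\dagger})$.

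Next I would push this relation through the channel. Since $\Lambda$ acts on the second tensor factor and the local unitary $U$ also acts on the second factor, we get
\begin{eqnarray*}
\rho_{\Phi,\Lambda} & = & (\mathcal{I}\otimes\Lambda)\bigl((\mathcal{I}\otimes U)\,\rho_{\Phi^{+}}\,(\mathcal{I}\otimes U^{\dagger})\bigr)\\
 & = & (\mathcal{I}\otimes\Lambda_{U})\,\rho_{\Phi^{+}},
\end{eqnarray*}
where $\Lambda_{U}(\cdot) := \Lambda(U(\cdot)U^{\dagger})$ is again a legitimate quantum channel (its Kraus operators are $\{A_{i}U\}$, which still satisfy the completeness relation because $U$ is unitary). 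So $\rho_{\Phi,\Lambda}$ is, up to relabelling the channel, the same object as $\rho_{\Phi^{+},\Lambda_{U}}$. This is the conceptual heart of the argument.

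The final step is to observe that $\mathbb{F}^{*}$ of a two-qudit state is invariant under local unitaries applied before or after it — indeed a local unitary is itself a (reversible) TP-LOCC operation, and $\mathbb{F}^{*}$ is a LOCC monotone in both directions under such reversible maps, hence invariant. More directly, one can absorb the unitary into the maximization over maximally entangled states in the definition of $\mathbb{F}$ and over TP-LOCC in the definition of $\mathbb{F}^{*}$: the set of maximally entangled states and the set of TP-LOCC operations are both closed under composition with local unitaries. Applying this with the local unitary $\mathcal{I}\otimes U$ that intertwines $\rho_{\Phi,\Lambda}$ and (a local-unitary image of) $\rho_{\Phi^{+},\Lambda}$, we conclude $\mathbb{F}^{*}(\rho_{\Phi,\Lambda}) = \mathbb{F}^{*}(\rho_{\Phi^{+},\Lambda})$.

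I do not expect a serious obstacle here; the only point requiring a little care is bookkeeping the side on which the unitary acts and verifying that conjugating $\Lambda$ by a unitary on its input yields a channel whose one-sided action on $\rho_{\Phi^{+}}$ reproduces $\rho_{\Phi,\Lambda}$ exactly. Since the statement is attributed to \cite{Ziman-Buzek-2005}, a one-line pointer together with the local-unitary-invariance remark would also suffice, but the self-contained version above is short enough to include.
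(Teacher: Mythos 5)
Your proposal has the right ingredients but assembles them in a way that leaves a gap at the decisive step. After writing $\left|\Phi\right\rangle =(\mathcal{I}\otimes U)\left|\Phi^{+}\right\rangle $ you push $U$ into the channel, obtaining $\rho_{\Phi,\Lambda}=\rho_{\Phi^{+},\Lambda_{U}}$ with $\Lambda_{U}(\cdot)=\Lambda(U\cdot U^{\dagger})$. That identity is fine, but it relates $\rho_{\Phi,\Lambda}$ to the output of a \emph{different} channel, not to a local-unitary image of $\rho_{\Phi^{+},\Lambda}$. Your closing sentence then asserts that the local unitary $\mathcal{I}\otimes U$ intertwines $\rho_{\Phi,\Lambda}$ and $\rho_{\Phi^{+},\Lambda}$; this is unjustified and false in general, because a unitary applied to the \emph{input} of $\Lambda$ does not commute with $\Lambda$: there is in general no unitary $U'$ with $(\mathcal{I}\otimes\Lambda)\bigl[(\mathcal{I}\otimes U)\sigma(\mathcal{I}\otimes U^{\dagger})\bigr]=(\mathcal{I}\otimes U')\,(\mathcal{I}\otimes\Lambda)(\sigma)\,(\mathcal{I}\otimes U'^{\dagger})$. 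The local-unitary invariance of $\mathbb{F}^{*}$, which you argue correctly, applies to unitaries acting on the output state, so it cannot be invoked to compare $\Lambda_{U}$ with $\Lambda$ without a further argument.

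The fix is exactly the transpose trick you mention parenthetically and then do not use: apply it \emph{before} the channel acts, writing $\left|\Phi\right\rangle =(U^{T}\otimes\mathcal{I})\left|\Phi^{+}\right\rangle $ so that the unitary sits entirely on the first factor, which the channel never touches. Then $\mathcal{I}\otimes\Lambda$ commutes with conjugation by $U^{T}\otimes\mathcal{I}$, giving $\rho_{\Phi,\Lambda}=(U^{T}\otimes\mathcal{I})\,\rho_{\Phi^{+},\Lambda}\,(\overline{U}\otimes\mathcal{I})$ --- a genuine local-unitary equivalence of the two output states --- and your (correct) observation that such a unitary can be absorbed into the optimal TP-LOCC finishes the proof. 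This is precisely the route the paper takes, writing a general maximally entangled state as $(W\otimes\mathcal{I})\left|\Phi^{+}\right\rangle $ with $W=UV^{T}$. So the conclusion and the lemma-level facts you cite are right; the written argument breaks only at the bookkeeping of which side carries the unitary, which is the very point you flagged as needing care.
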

The proof is simple. Since every maximally entangled state $\left|\Phi\right\rangle \in\mathbb{C}^{d}\otimes\mathbb{C}^{d}$
can be written as $\left|\Phi\right\rangle =\left(U\otimes V\right)\left|\Phi^{+}\right\rangle $
for some $U,V\in SU\left(d\right)$, using the identity $\left(\mathcal{I}\otimes V\right)\left|\Phi^{+}\right\rangle =\left(V^{T}\otimes\mathcal{I}\right)\left|\Phi^{+}\right\rangle $
we can write $\left|\Phi\right\rangle =\left(W\otimes\mathcal{I}\right)\left|\Phi^{+}\right\rangle $
where $W=UV^{T}$ is also a unitary operator. Because the channel
$\Lambda$ acts only on the second qudit, we have $\rho_{\Phi,\Lambda}=\left(W\otimes\mathcal{I}\right)\rho_{\Phi^{+},\Lambda}\left(W^{\dagger}\otimes\mathcal{I}\right).$
Thus the density matrices $\rho_{\Phi,\Lambda}$ and $\rho_{\Phi^{+},\Lambda}$
are connected by a local unitary operator acting on the first system.
Because the first system never interacts with the channel, this local
unitary can always be absorbed in the post-transmission optimal TP-LOCC
associated with the state transformations (defined earlier) $\rho_{\Phi,\Lambda}\rightarrow\rho_{\Phi,\Lambda}^{\prime}$
and $\rho_{\Phi^{+},\Lambda}\rightarrow\rho_{\Phi^{+},\Lambda}^{\prime}$.
Therefore, $\mathbb{F}^{*}\left(\rho_{\Phi,\Lambda}\right)=\mathbb{F}^{*}\left(\rho_{\Phi^{+},\Lambda}\right)$. 
\begin{lem}
\label{F(lambda)>=00003Dlambdamax} For a $d$-dimensional quantum
channel $\Lambda$, $\mathbb{F}\left(\Lambda\right)\geq\lambda_{\max}\left(\rho_{\Phi^{+},\Lambda}\right)$
where $\left|\Phi^{+}\right\rangle =\frac{1}{\sqrt{d}}\sum_{i=0}^{d-1}\left|ii\right\rangle $
and $\lambda_{\max}\left(\rho_{\Phi^{+},\Lambda}\right)$ is the largest
eigenvalue of the density matrix $\rho_{\Phi^{+},\Lambda}$. \end{lem}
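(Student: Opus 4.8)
The plan is to establish the bound by exhibiting a single, carefully chosen transmitted state rather than by optimising. Since the trivial (identity) operation lies in TP-LOCC we have $\mathbb{F}^{*}(\sigma)\geq\mathbb{F}(\sigma)$ for every state $\sigma$, and by definition $\mathbb{F}(\Lambda)=\max_{|\psi\rangle}\mathbb{F}^{*}(\rho_{\psi,\Lambda})$; moreover $\mathbb{F}(\rho_{\psi,\Lambda})=\max_{|\Phi\rangle}\langle\Phi|\rho_{\psi,\Lambda}|\Phi\rangle\geq\langle\Phi^{+}|\rho_{\psi,\Lambda}|\Phi^{+}\rangle$. Hence it is enough to produce one pure state $|\psi\rangle\in\mathbb{C}^{d}\otimes\mathbb{C}^{d}$ with $\langle\Phi^{+}|\rho_{\psi,\Lambda}|\Phi^{+}\rangle\geq\lambda_{\max}(\rho_{\Phi^{+},\Lambda})$, for then $\mathbb{F}(\Lambda)\geq\mathbb{F}^{*}(\rho_{\psi,\Lambda})\geq\mathbb{F}(\rho_{\psi,\Lambda})\geq\lambda_{\max}(\rho_{\Phi^{+},\Lambda})$.

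The correct transmitted state comes from the Choi--Jamiolkowski picture. Let $|\chi\rangle$ be a normalised eigenvector of the density matrix $\rho_{\Phi^{+},\Lambda}$ belonging to its largest eigenvalue, and write $|\chi\rangle=\sum_{j,k}\chi_{jk}|jk\rangle$ in the computational basis, so $\chi$ is a $d\times d$ matrix with $\mathrm{Tr}(\chi^{\dagger}\chi)=1$. I would take as the transmitted state the pure state $|\psi\rangle$ whose coefficient matrix in the computational basis is $\psi:=\chi^{\dagger}$; equivalently $|\psi\rangle$ is obtained from $|\chi\rangle$ by entrywise complex conjugation followed by exchanging the two tensor factors, so $|\psi\rangle$ has exactly the same Schmidt coefficients as $|\chi\rangle$ (in particular it is maximally entangled iff $|\chi\rangle$ is) and $\mathrm{Tr}(\psi^{\dagger}\psi)=1$. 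Using the identities $(M\otimes\mathcal{I})|\Phi^{+}\rangle=(\mathcal{I}\otimes M^{T})|\Phi^{+}\rangle$ and $\langle\Phi^{+}|(X\otimes Y)|\Phi^{+}\rangle=\tfrac{1}{d}\mathrm{Tr}(X^{T}Y)$, together with a Kraus representation $\Lambda(\cdot)=\sum_{i}A_{i}(\cdot)A_{i}^{\dagger}$, a one-line computation gives $\langle\Phi^{+}|\rho_{\psi,\Lambda}|\Phi^{+}\rangle=\tfrac{1}{d}\sum_{i}\bigl|\mathrm{Tr}(\psi^{T}A_{i})\bigr|^{2}$ and $\langle\chi|\rho_{\Phi^{+},\Lambda}|\chi\rangle=\tfrac{1}{d}\sum_{i}\bigl|\mathrm{Tr}(\bar{\chi}\,A_{i})\bigr|^{2}$. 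Since $\psi^{T}=\bar{\chi}$ by construction, the two right-hand sides coincide, so $\langle\Phi^{+}|\rho_{\psi,\Lambda}|\Phi^{+}\rangle=\lambda_{\max}(\rho_{\Phi^{+},\Lambda})$, which is exactly what the reduction in the first paragraph needs.

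I do not expect a genuine obstacle here: the one substantive idea is recognising, through the coefficient-matrix (Choi) correspondence, that feeding the channel the conjugate-transpose partner of the dominant eigenvector of its Choi state forces the output to overlap $|\Phi^{+}\rangle$ with weight exactly $\lambda_{\max}(\rho_{\Phi^{+},\Lambda})$; the rest is bookkeeping with the two displayed identities and the definitions of $\mathbb{F}$, $\mathbb{F}^{*}$ and $\mathbb{F}(\Lambda)$. An equivalent route, perhaps more in keeping with the role the dual map plays later, is to note the adjoint identity $\langle\Phi^{+}|\rho_{\psi,\Lambda}|\Phi^{+}\rangle=\langle\psi|\rho_{\Phi^{+},\hat{\Lambda}}|\psi\rangle$, whence $\max_{|\psi\rangle}\langle\Phi^{+}|\rho_{\psi,\Lambda}|\Phi^{+}\rangle=\lambda_{\max}(\rho_{\Phi^{+},\hat{\Lambda}})$, and then to observe that $\rho_{\Phi^{+},\hat{\Lambda}}$ and $\rho_{\Phi^{+},\Lambda}$ are related by $\rho_{\Phi^{+},\hat{\Lambda}}=\mathrm{SWAP}\,\overline{\rho_{\Phi^{+},\Lambda}}\,\mathrm{SWAP}$ and so share the same spectrum; the only point in that version worth checking carefully is this last identification of the two Choi states with the normalisation conventions in use.
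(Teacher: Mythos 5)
Your proof is correct, and your closing ``equivalent route'' is essentially the paper's own proof: the paper writes $\mathbb{F}(\rho_{\psi,\Lambda})=\langle\psi|\rho_{\Phi^{+},\hat{\Lambda}}|\psi\rangle$, maximizes over $|\psi\rangle$ to obtain $\lambda_{\max}(\rho_{\Phi^{+},\hat{\Lambda}})$, and then invokes the spectral identity $\lambda_{\max}(\rho_{\Phi^{+},\hat{\Lambda}})=\lambda_{\max}(\rho_{\Phi^{+},\Lambda})$, which it imports from an earlier reference without proof. Your primary argument rests on the same ricochet mechanism but is more self-contained and constructive: by transmitting the state whose coefficient matrix is $\chi^{\dagger}$ (with $|\chi\rangle$ the dominant eigenvector of the Choi state) you verify directly that $\langle\Phi^{+}|\rho_{\psi,\Lambda}|\Phi^{+}\rangle=\tfrac{1}{d}\sum_{i}|\mathrm{Tr}(\psi^{T}A_{i})|^{2}=\tfrac{1}{d}\sum_{i}|\mathrm{Tr}(\bar{\chi}A_{i})|^{2}=\lambda_{\max}(\rho_{\Phi^{+},\Lambda})$, which in effect supplies a proof of the imported identity rather than citing it. Both trace formulas check out, and the reduction $\mathbb{F}(\Lambda)\geq\mathbb{F}^{*}(\rho_{\psi,\Lambda})\geq\mathbb{F}(\rho_{\psi,\Lambda})\geq\langle\Phi^{+}|\rho_{\psi,\Lambda}|\Phi^{+}\rangle$ is exactly the one the paper uses. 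A pleasant side effect of your construction is that your explicit input state coincides with the state $|\psi^{\prime}\rangle$ (the dominant eigenvector of $\rho_{\Phi^{+},\hat{\Lambda}}$, whose coefficient matrix is precisely $\chi^{\dagger}$) that the paper only introduces later in Lemma \ref{existence of psiprime}, together with the observation that it shares the Schmidt spectrum of $|\chi\rangle$.
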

\begin{proof}
The proof is along the same lines as in the qubit case \cite{PBG-2014}.
We begin by noting that for any $\left|\psi\right\rangle \in\mathbb{C}^{d}\otimes\mathbb{C}^{d}$,
\begin{eqnarray}
\mathbb{F}\left(\Lambda\right) & \geq & \max_{\psi}\mathbb{F}\left(\rho_{\psi,\Lambda}\right)=\max_{\psi,\Phi}\left\langle \Phi\right|\rho_{\psi,\Lambda}\left|\Phi\right\rangle \label{F-Lambda-inequality}
\end{eqnarray}
where $\left|\Phi\right\rangle $ is maximally entangled. Using the
relations $\left|\Phi\right\rangle =\left(U\otimes V\right)\left|\Phi^{+}\right\rangle $
for some $U,V\in SU\left(d\right)$ and $\left(\mathcal{I}\otimes V\right)\left|\Phi^{+}\right\rangle =\left(V^{T}\otimes\mathcal{I}\right)\left|\Phi^{+}\right\rangle $,
it is straightforward to show that 
\begin{eqnarray}
\mathbb{F}\left(\rho_{\psi,\Lambda}\right) & = & \left\langle \psi\left|\rho_{\Phi^{+},\hat{\Lambda}}\right|\psi\right\rangle \label{eq:F-rhhopsilambda}
\end{eqnarray}
where $\hat{\Lambda}$ is the dual channel. From (\ref{F-Lambda-inequality})
and (\ref{eq:F-rhhopsilambda}) we therefore get
\begin{eqnarray*}
\mathbb{F}\left(\Lambda\right) & \geq & \lambda_{\max}\left(\rho_{\Phi^{+},\hat{\Lambda}}\right)\\
 & = & \lambda_{\max}\left(\rho_{\Phi^{+},\Lambda}\right)
\end{eqnarray*}
where we have used $\lambda_{\max}\left(\rho_{\Phi^{+},\hat{\Lambda}}\right)=\lambda_{\max}\left(\rho_{\Phi^{+},\Lambda}\right)$
proved in \cite{PBG-2014} for any $d$ dimensional channel $\Lambda$.
\end{proof}

\subsection*{Main results}

Let us now consider the $d$-dimensional quantum channel $\Omega$
defined by the Kraus operators $A_{i}$ for $i=0,\dots,d-1$,
\begin{eqnarray}
A_{0}={\rm diag}\left(1,x_{1},x_{2},\dots,x_{d-1}\right) & ; & \left(A_{m}\right)_{ij}=\sqrt{1-x_{m}^{2}}\delta_{0i}\delta_{mj}\; i,j=0,\dots d-1\;\forall\, m=1,\dots,d-1\label{KraussOmega}
\end{eqnarray}
where $0<x_{i}<1$ for every $i$ and $x_{i}\neq x_{j}$ for at least
one pair $\left(i,j\right)$. That the Kraus operators defined above
indeed describe a legitimate quantum channel can be seen as follows.
First, it is easy to check that
\begin{eqnarray}
\left(A_{m}^{\dagger}A_{m}\right)_{ik}=\left(1-x_{m}^{2}\right)\delta_{mi}\delta_{mk} & ; & A_{0}^{\dagger}A_{0}={\rm diag}\left(1,x_{1}^{2},x_{2}^{2},\dots,x_{d-1}^{2}\right)\label{AdaggerA}
\end{eqnarray}
Clearly the operators $A_{i}^{\dagger}A_{i}$ are positive and moreover,
Eqs.$\,$(\ref{AdaggerA}) lead to 
\begin{eqnarray}
A_{0}^{\dagger}A_{0}+\sum_{m=1}^{d-1}A_{m}^{\dagger}A_{m} & = & \mathcal{I}_{d\times d}.\label{completeness-Kraus}
\end{eqnarray}
We now state our result. 
\begin{thm}
For the $d$-dimensional quantum channel $\Omega$ described above,
the following inequalities hold in every finite dimension $d\geq3$:
\begin{equation}
\mathbb{F}\left(\Omega\right)\geq\mathbb{F}\left(\rho_{\psi^{\prime},\Omega}\right)>\mathbb{F}^{*}\left(\rho_{\Phi,\Omega}\right)\label{Inequality-Theorem-1}
\end{equation}
 where $\left|\psi^{\prime}\right\rangle \in\mathbb{C}^{d}\otimes\mathbb{C}^{d}$
is a pure state, not maximally entangled, and $\left|\Phi\right\rangle \in\mathbb{C}^{d}\otimes\mathbb{C}^{d}$
is any maximally entangled state. 
\end{thm}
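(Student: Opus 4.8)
The plan is to exhibit the nonmaximally entangled state $\vert\psi'\rangle$ explicitly as the principal eigenvector of $\rho_{\Phi^{+},\hat\Omega}$ and to chase through the two inequalities in (\ref{Inequality-Theorem-1}) separately. The left inequality $\mathbb{F}(\Omega)\geq\mathbb{F}(\rho_{\psi',\Omega})$ is immediate from the definition (\ref{optimal-fidelity}) together with $\mathbb{F}^{*}\geq\mathbb{F}$, since $\mathbb{F}(\Omega)=\max_{\psi}\mathbb{F}^{*}(\rho_{\psi,\Omega})\geq\mathbb{F}^{*}(\rho_{\psi',\Omega})\geq\mathbb{F}(\rho_{\psi',\Omega})$. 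So all the work is in the strict inequality $\mathbb{F}(\rho_{\psi',\Omega})>\mathbb{F}^{*}(\rho_{\Phi,\Omega})$. By Lemma~\ref{F*phi=00003DF*phi+} the right side equals $\mathbb{F}^{*}(\rho_{\Phi^{+},\Omega})$, so it suffices to show $\mathbb{F}(\rho_{\psi',\Omega})>\mathbb{F}^{*}(\rho_{\Phi^{+},\Omega})$.

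First I would compute the two sides for the channel $\Omega$. Using (\ref{eq:F-rhhopsilambda}), $\mathbb{F}(\rho_{\psi',\Omega})=\langle\psi'\vert\rho_{\Phi^{+},\hat\Omega}\vert\psi'\rangle=\lambda_{\max}(\rho_{\Phi^{+},\hat\Omega})$ by the very choice of $\vert\psi'\rangle$; and by the identity $\lambda_{\max}(\rho_{\Phi^{+},\hat\Omega})=\lambda_{\max}(\rho_{\Phi^{+},\Omega})$ quoted from \cite{PBG-2014} in the proof of Lemma~\ref{F(lambda)>=00003Dlambdamax}, this is just $\lambda_{\max}(\rho_{\Phi^{+},\Omega})$. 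So the left side is a clean eigenvalue of an explicit matrix. I would write out $\rho_{\Phi^{+},\Omega}=(\mathcal{I}\otimes\Omega)\vert\Phi^{+}\rangle\langle\Phi^{+}\vert$ in the computational basis using the Kraus operators (\ref{KraussOmega}); the state $A_0$ contributes a rank-one block supported on $\mathrm{span}\{\vert ii\rangle\}$ with vector $(1,x_1,\dots,x_{d-1})$, and each $A_m$ contributes a diagonal term $\tfrac{1-x_m^2}{d}\vert 0m\rangle\langle 0m\vert$. The largest eigenvalue is then $\tfrac1d\big(1+\sum_{i\ge 1}x_i^2\big)$ coming from the rank-one block (one checks this beats the other diagonal entries $\tfrac{1-x_m^2}{d}$). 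For the right side, $\mathbb{F}^{*}(\rho_{\Phi^{+},\Omega})$, I would invoke the Verstraete--Verschelde / Rains-type bound (\ref{F*<=00003Dnegativity}): $\mathbb{F}^{*}(\rho_{\Phi^{+},\Omega})\leq\frac{1+2\mathcal{N}(\rho_{\Phi^{+},\Omega})}{d}$, and then bound $\mathcal{N}(\rho_{\Phi^{+},\Omega})$ above by computing the partial transpose of the explicit matrix and summing its negative eigenvalues — the partial transpose of the rank-one block produces off-diagonal $2\times2$ blocks $\begin{pmatrix}0 & x_ix_j\\ x_ix_j & 0\end{pmatrix}$ contributing negative eigenvalues $-x_ix_j/d$, so $\mathcal{N}(\rho_{\Phi^{+},\Omega})=\frac1d\sum_{i<j}x_ix_j$. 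Combining, the target reduces to the purely algebraic inequality
\begin{equation}
1+\sum_{i\ge 1}x_i^2\;>\;1+\frac{2}{d}\sum_{i<j}x_ix_j,
\end{equation}
equivalently $\sum_{i\ge1}x_i^2>\frac2d\sum_{i<j}x_ix_j$, where the $x_i$ run over $x_1,\dots,x_{d-1}$ (with $x_0:=1$ also entering the rank-one vector, so in fact the index set has $d$ entries including the "$1$"). This is a strict convexity/Cauchy--Schwarz statement: $\big(\sum_i x_i\big)^2\le d\sum_i x_i^2$ with equality iff all $x_i$ are equal, and the hypothesis on $\Omega$ guarantees $x_i\neq x_j$ for some pair, so the inequality is strict. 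I would also note that this simultaneously establishes (\ref{eF(Omega)>Nagativity}), since $\mathbb{F}(\Omega)\ge\mathbb{F}(\rho_{\psi',\Omega})=\lambda_{\max}(\rho_{\Phi^{+},\Omega})>\frac{1+2\mathcal{N}(\rho_{\Phi^{+},\Omega})}{d}$.

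It remains to check that $\vert\psi'\rangle$ is genuinely not maximally entangled. I would argue this by exhibiting $\vert\psi'\rangle$ concretely: the top eigenvector of $\rho_{\Phi^{+},\hat\Omega}$ lives in the span of $\{\vert ii\rangle\}$ and is proportional to $\sum_i y_i\vert ii\rangle$ with Schmidt coefficients $y_i$ determined by the $x_i$ (not all equal, by hypothesis), hence its reduced density matrix is not maximally mixed. The main obstacle I anticipate is twofold: (i) verifying that the rank-one block's eigenvalue genuinely dominates \emph{all} other eigenvalues of $\rho_{\Phi^{+},\Omega}$ and of its partial transpose — i.e. that I have correctly identified $\lambda_{\max}$ and $\mathcal{N}$ — which requires care with the block structure but is routine; and (ii) the logical point that the bound (\ref{F*<=00003Dnegativity}) is only an upper bound on $\mathbb{F}^{*}(\rho_{\Phi^{+},\Omega})$, so the chain of inequalities goes the "right way" only because I am lower-bounding the left side by an \emph{exact} eigenvalue while upper-bounding the right side; no matching lower bound on $\mathbb{F}^{*}(\rho_{\Phi^{+},\Omega})$ is needed. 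The genuinely delicate dimensional point — why $d\ge3$ is required — is that for $d=2$ the channel $\Omega$ has only a single free parameter pair and the strict inequality can degenerate; for $d\ge3$ the hypothesis $x_i\ne x_j$ for some pair is compatible with the Kraus normalization and forces strictness.
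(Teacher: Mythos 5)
Your overall strategy coincides with the paper's: lower-bound $\mathbb{F}(\rho_{\psi',\Omega})$ by the exact eigenvalue $\lambda_{\max}(\rho_{\Phi^{+},\Omega})=\frac{1}{d}\bigl(1+\sum_{i=1}^{d-1}x_{i}^{2}\bigr)$, upper-bound $\mathbb{F}^{*}(\rho_{\Phi^{+},\Omega})$ by $\frac{1+2\mathcal{N}(\rho_{\Phi^{+},\Omega})}{d}$, and reduce everything to an algebraic inequality in the $x_{i}$. The gap is in your computation of the negativity. The partial transpose of $\rho_{\Phi^{+},\Omega}$ is not the partial transpose of the rank-one block plus untouched positive diagonal terms: the diagonal contributions $\frac{1-x_{m}^{2}}{d}\left|m0\right\rangle \left\langle m0\right|$ coming from the Kraus operators $A_{m}$ land inside the very $2\times2$ blocks ${\rm span}\{\left|0m\right\rangle ,\left|m0\right\rangle \}$ that carry the off-diagonal entries $x_{m}/d$ produced by the pairs $(0,m)$ of the rank-one block. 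Each such block is $\frac{1}{d}\begin{pmatrix}0 & x_{m}\\ x_{m} & 1-x_{m}^{2}\end{pmatrix}$, whose negative eigenvalue is $-x_{m}^{2}/d$, not $-x_{m}/d$. The correct negativity is therefore $\mathcal{N}(\rho_{\Phi^{+},\Omega})=\frac{1}{d}\bigl(\sum_{i=1}^{d-1}x_{i}^{2}+\sum_{1\leq i<j\leq d-1}x_{i}x_{j}\bigr)$, and the target inequality reduces to $(d-2)\sum_{i=1}^{d-1}x_{i}^{2}>2\sum_{1\leq i<j\leq d-1}x_{i}x_{j}$, which is precisely $\sum_{1\leq i<j\leq d-1}(x_{i}-x_{j})^{2}>0$ and holds because some pair satisfies $x_{i}\neq x_{j}$ (and degenerates to $0>0$ for $d=2$, which is where the restriction $d\geq3$ actually enters).

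The slip is not cosmetic. With your formula $\mathcal{N}=\frac{1}{d}\sum_{i<j}x_{i}x_{j}$ over the index set including $x_{0}=1$, the reduced inequality $\sum_{i\geq1}x_{i}^{2}>\frac{2}{d}\sum_{i<j}x_{i}x_{j}$ is false for admissible channels: take $d=3$, $x_{1}=0.1$, $x_{2}=0.2$; the left side is $0.05$ while the right side is $\frac{2}{3}(0.1+0.2+0.02)\approx0.213$. (Cauchy--Schwarz only yields $\frac{2}{d}\sum_{i<j}x_{i}x_{j}\leq\frac{d-1}{d}\sum_{i=0}^{d-1}x_{i}^{2}$, which is too weak here.) If instead you meant the index set $\{1,\dots,d-1\}$, your $\mathcal{N}$ underestimates the true negativity, so $\frac{1+2\mathcal{N}}{d}$ is no longer an upper bound on $\mathbb{F}^{*}(\rho_{\Phi^{+},\Omega})$ and the chain of inequalities breaks in the other direction. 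Either way the key step must be repaired by the block diagonalization above. The rest of your argument is sound, and your explicit identification of $\left|\psi'\right\rangle \propto\left|00\right\rangle +\sum_{i\geq1}x_{i}\left|ii\right\rangle$ --- whose Schmidt vector $(1,x_{1},\dots,x_{d-1})$ is nonuniform since every $x_{i}<1$ --- is in fact a more direct proof that $\left|\psi'\right\rangle$ is not maximally entangled than the paper's indirect argument via the monotonicity relation $\mathbb{F}\leq\mathbb{F}^{*}$.
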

The inequalities (\ref{Inequality-Theorem-1}) are established through
the following results. 
\begin{lem}
\label{lambdamax>negativity} For any maximally entangled state $\left|\Phi\right\rangle \in\mathbb{C}^{d}\otimes\mathbb{C}^{d}$,
\begin{eqnarray}
\lambda_{\max}\left(\rho_{\Phi^{+},\Omega}\right) & > & \frac{1+2\mathcal{N}\left(\rho_{\Phi^{+},\Omega}\right)}{d}\label{Lemma3-inequality}
\end{eqnarray}
 for all $d\geq3$, \end{lem}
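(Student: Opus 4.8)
The plan is to compute everything in closed form: evaluate $\rho_{\Phi^+,\Omega}$ and its partial transpose directly from the Kraus operators~(\ref{KraussOmega}), read off $\lambda_{\max}$ and the negativity, and then reduce~(\ref{Lemma3-inequality}) to an elementary inequality of Cauchy--Schwarz type.

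First I would set $x_0:=1$ and note that the Kraus operators act as $A_0|i\rangle=x_i|i\rangle$ and $A_m=\sqrt{1-x_m^2}\,|0\rangle\langle m|$ for $m=1,\dots,d-1$. Applying $\mathcal I\otimes\Omega$ to $|\Phi^+\rangle\langle\Phi^+|=\frac1d\sum_{i,j}|ii\rangle\langle jj|$ then yields
\[
\rho_{\Phi^+,\Omega}=\frac1d\Big(|v\rangle\langle v|+\sum_{m=1}^{d-1}(1-x_m^2)\,|m0\rangle\langle m0|\Big),\qquad |v\rangle:=\sum_{i=0}^{d-1}x_i\,|ii\rangle .
\]
Since $\langle m0|v\rangle=0$ for every $m\ge1$ and the vectors $|m0\rangle$ are mutually orthogonal, this is an orthogonal decomposition, so the nonzero eigenvalues are $\tfrac1d\langle v|v\rangle=\tfrac1d\big(1+\sum_{m=1}^{d-1}x_m^2\big)$ together with $\tfrac1d(1-x_m^2)$ for $m=1,\dots,d-1$; as the former exceeds every $1-x_m^2$, one gets $\lambda_{\max}(\rho_{\Phi^+,\Omega})=\frac1d\big(1+\sum_{m=1}^{d-1}x_m^2\big)$.

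Next, partial transposition on the second qudit sends $|ii\rangle\langle jj|\mapsto|ij\rangle\langle ji|$ and fixes each $|m0\rangle\langle m0|$, so $\rho_{\Phi^+,\Omega}^{T_B}$ splits as the direct sum of a nonnegative diagonal piece on $\mathrm{span}\{|ii\rangle\}$, a $2\times2$ block $\frac1d\begin{pmatrix}1-x_m^2 & x_m\\ x_m & 0\end{pmatrix}$ on each $\mathrm{span}\{|m0\rangle,|0m\rangle\}$ with $m\ge1$, and a $2\times2$ block $\frac1d\begin{pmatrix}0 & x_ix_j\\ x_ix_j & 0\end{pmatrix}$ on each $\mathrm{span}\{|ij\rangle,|ji\rangle\}$ with $1\le i<j\le d-1$. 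Using $\sqrt{(1-x_m^2)^2+4x_m^2}=1+x_m^2$ the negative eigenvalues are exactly $-x_m^2/d$ and $-x_ix_j/d$, whence
\[
\mathcal N(\rho_{\Phi^+,\Omega})=\frac1d\Big(\sum_{m=1}^{d-1}x_m^2+\sum_{1\le i<j\le d-1}x_ix_j\Big).
\]

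With these two formulas in hand, (\ref{Lemma3-inequality}) is equivalent to $\tfrac{d-2}{2}\sum_{m=1}^{d-1}x_m^2>\sum_{1\le i<j\le d-1}x_ix_j$; substituting $\big(\sum_m x_m\big)^2=\sum_m x_m^2+2\sum_{i<j}x_ix_j$ turns this into $(d-1)\sum_{m=1}^{d-1}x_m^2>\big(\sum_{m=1}^{d-1}x_m\big)^2$, i.e. $\sum_{1\le i<j\le d-1}(x_i-x_j)^2>0$, which is Cauchy--Schwarz for $(x_1,\dots,x_{d-1})$ against $(1,\dots,1)$ and is strict precisely because the hypothesis on $\Omega$ rules out $x_1=\dots=x_{d-1}$ (for $d=3$ it is just $(x_1-x_2)^2>0$). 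The only genuine work is the bookkeeping in the two computational steps --- establishing the orthogonality/support structure of $\rho_{\Phi^+,\Omega}$, correctly enumerating the $2\times2$ blocks of the partial transpose, and fixing the normalization of $\mathcal N$ so that it matches~(\ref{F*<=00003Dnegativity}); once $\lambda_{\max}$ and $\mathcal N$ are in closed form the inequality is automatic.
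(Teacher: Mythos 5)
Your proposal is correct and follows essentially the same route as the paper: compute $\rho_{\Phi^{+},\Omega}$ from the Kraus operators to get $\lambda_{\max}=\frac{1}{d}\bigl(1+\sum_{i}x_{i}^{2}\bigr)$, diagonalize the partial transpose to get $\mathcal{N}=\frac{1}{d}\bigl(\sum_{i}x_{i}^{2}+\sum_{i<j}x_{i}x_{j}\bigr)$, and reduce the claim to $(d-2)\sum_{i}x_{i}^{2}>2\sum_{i<j}x_{i}x_{j}$, which both you and the paper settle via the identity $\sum_{i<j}(x_{i}-x_{j})^{2}>0$ (your Cauchy--Schwarz phrasing is the same inequality). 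The only differences are presentational: you make the orthogonality of the support explicit and organize the partial transpose into $2\times2$ blocks, where the paper simply lists the eigenvalues.
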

\begin{proof}
First we obtain $\lambda_{\max}\left(\rho_{\Phi^{+},\Omega}\right)$.
The action of the Kraus operators given by (\ref{KraussOmega}) on
$\left|\Phi^{+}\right\rangle $ are given by: 
\begin{eqnarray}
\left(\mathcal{I}\otimes A_{0}\right)\left|\Phi^{+}\right\rangle  & = & \frac{1}{\sqrt{d}}\left(\left|00\right\rangle +\sum_{i=1}^{d-1}x_{i}\left|i\right\rangle \left|i\right\rangle \right)=\left|\phi_{0}\right\rangle ,\label{phi0}
\end{eqnarray}
and for $m=1,\dots,d-1$ 
\begin{eqnarray}
\left(\mathcal{I}\otimes A_{m}\right)\left|\Phi^{+}\right\rangle  & = & \frac{1}{\sqrt{d}}\sum_{i=0}^{d-1}\left|i\right\rangle A_{m}\left|i\right\rangle \nonumber \\
 & = & \frac{1}{\sqrt{d}}\sum_{i=0}^{d-1}\left|i\right\rangle \sqrt{1-x_{m}^{2}}\delta_{im}\left|0\right\rangle \;\because A_{m}\left|i\right\rangle =\sqrt{1-x_{m}^{2}}\delta_{im}\left|0\right\rangle \nonumber \\
 & = & \frac{1}{\sqrt{d}}\sqrt{1-x_{m}^{2}}\left|m\right\rangle \left|0\right\rangle =\left|\phi_{m}\right\rangle .\label{phim}
\end{eqnarray}
Thus, 
\begin{eqnarray}
\rho_{\Phi^{+},\Omega} & = & \sum_{m=0}^{d-1}\left(\mathcal{I}\otimes A_{m}\right)\rho_{\Phi^{+}}\left(\mathcal{I}\otimes A_{m}^{\dagger}\right)\nonumber \\
 & = & \left|\phi_{0}\right\rangle \left\langle \phi_{0}\right|+\sum_{m=1}^{d-1}\left|\phi_{m}\right\rangle \left\langle \phi_{m}\right|.\label{rhophi+omega}
\end{eqnarray}
As $\rho_{\Phi^{+},\Omega}$ is already in the diagonal form, it is
straightforward to obtain its largest eigenvalue,
\begin{eqnarray}
\lambda_{\max}\left(\rho_{\Phi^{+},\Omega}\right) & = & \frac{1}{d}\left(1+\sum_{i=1}^{d-1}x_{i}^{2}\right).\label{lambdamax}
\end{eqnarray}
Next, we compute negativity $\mathcal{N}\left(\rho_{\Phi^{+},\Omega}\right)$.
The partial transposed matrix corresponding to $\rho_{\Phi^{+},\Omega}$
is given by
\begin{eqnarray*}
\rho_{\Phi^{+},\Omega}^{\Gamma} & = & \frac{1}{d}\left[\left|00\right\rangle \left\langle 00\right|+\sum_{i=1}^{d-1}x_{i}\left(\left|0i\right\rangle \left\langle i0\right|+\left|i0\right\rangle \left\langle 0i\right|\right)+\sum_{i,j=1}^{d-1}x_{i}x_{j}\left|ij\right\rangle \left\langle ji\right|+\sum_{i=1}^{d-1}\left(1-x_{i}^{2}\right)\left|i0\right\rangle \left\langle i0\right|\right]
\end{eqnarray*}
with easily computed eigenvalues,
\begin{eqnarray*}
\frac{1}{d}\;\left({\rm multiplicity}\, d\right)\;; & \;\pm\frac{x_{i}^{2}}{d},\; i=1,\dots,d-1\;; & \;\pm\frac{x_{i}x_{j}}{d},\; i<j\; i,j=1,\dots,d-1.
\end{eqnarray*}
 As negativity is defined as the absolute value of the sum of the
negative eigenvalues \cite{Negativity}, we have 
\begin{eqnarray}
\mathcal{N}\left(\rho_{\Phi^{+},\Omega}\right) & = & \frac{1}{d}\left(\sum_{i=1}^{d-1}x_{i}^{2}+\sum_{1\leq i<j\leq d-1}x_{i}x_{j}\right).\label{negativity}
\end{eqnarray}
From (\ref{lambdamax}) and (\ref{negativity}) we see that that the
inequality (\ref{Lemma3-inequality}) holds provided:
\begin{eqnarray*}
\left(d-2\right)\sum_{i=1}^{d-1}x_{i}^{2} & > & 2\sum_{1\leq i<j\leq d-1}x_{i}x_{j}.
\end{eqnarray*}
Now, $\sum_{1\leq i<j\leq d-1}\left(x_{i}-x_{j}\right)^{2}>0$, since
for at least one pair $\left(i,j\right)$, $x_{i}\neq x_{j}$ (as
given in the definition of the channel), the above inequality always
holds for all $d\geq3$. This completes the proof. 
\end{proof}
Let us now note the consequences of the above lemma. 

Since $\mathbb{F}\left(\Omega\right)\geq\lambda_{\max}\left(\rho_{\Phi^{+},\Omega}\right)$
(from Lemma \ref{F(lambda)>=00003Dlambdamax}), we see that 
\begin{eqnarray*}
\mathbb{F}\left(\Omega\right) & > & \frac{1+2\mathcal{N}\left(\rho_{\Phi^{+},\Omega}\right)}{d}.
\end{eqnarray*}
Thus the generalization of the formula (\ref{F=00003Dnegatvity})
that allows us to compute optimal fidelity exactly for qubit channels
does not hold in general in higher dimensions. 

We now show that $\left|\psi_{{\rm opt}}\right\rangle $ is a nonmaximally
entangled state. From Eq.$\,$(\ref{F*<=00003Dnegativity}), we have
$\mathbb{F}^{*}\left(\rho_{\Phi^{+}.\Omega}\right)\leq\frac{1+2\mathcal{N}\left(\rho_{\Phi^{+},\Omega}\right)}{d}$.
Using this inequality, and (\ref{F(lambda)>=00003Dlambdamax}), and
the inequality (\ref{Lemma3-inequality}) we immediately obtain $\mathbb{F}\left(\Omega\right)\geq\lambda_{\max}\left(\rho_{\Phi^{+},\Omega}\right)>\mathbb{F}^{*}\left(\rho_{\Phi^{+}.\Omega}\right)$.
Since $\mathbb{F}^{*}\left(\rho_{\Phi^{+}.\Omega}\right)=\mathbb{F}^{*}\left(\rho_{\Phi.\Omega}\right)$
for any maximally entangled state $\left|\Phi\right\rangle \in\mathbb{C}^{d}\otimes\mathbb{C}^{d}$
{[}Lemma \ref{F*phi=00003DF*phi+}{]}, we have 
\begin{equation}
\mathbb{F}\left(\Omega\right)\geq\lambda_{\max}\left(\rho_{\Phi^{+},\Omega}\right)>\mathbb{F}^{*}\left(\rho_{\Phi.\Omega}\right).\label{intermediate-inequality}
\end{equation}
Noting that $\mathbb{F}\left(\Omega\right)=\mathbb{F}^{*}\left(\rho_{\psi_{{\rm opt}},\Omega}\right)$,
we get $\mathbb{F}^{*}\left(\rho_{\psi_{{\rm opt}},\Omega}\right)>\mathbb{F}^{*}\left(\rho_{\Phi.\Omega}\right)$
for all maximally entangled states $\left|\Phi\right\rangle \in\mathbb{C}^{d}\otimes\mathbb{C}^{d}$.
We therefore conclude that $\left|\psi_{{\rm opt}}\right\rangle $
must be nonmaximally entangled. While we are unable to obtain $\left|\psi_{{\rm opt}}\right\rangle $,
the following lemma gives us a possible candidate and allows us to
obtain a lower bound on $\mathbb{F}\left(\Omega\right)$. 
\begin{lem}
\label{existence of psiprime} Let $\left|\psi^{\prime}\right\rangle $
be the eigenvector corresponding to the eigenvalue $\lambda_{{\rm max}}\left(\rho_{\Phi^{+},\hat{\Omega}}\right)$.
Then, $\lambda_{\max}\left(\rho_{\Phi^{+},\Omega}\right)=\mathbb{F}\left(\rho_{\psi^{\prime},\Omega}\right)$.
Moreover, $\left|\psi^{\prime}\right\rangle $ is not maximally entangled. \end{lem}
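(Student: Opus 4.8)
The plan is to obtain the equality $\mathbb{F}\left(\rho_{\psi',\Omega}\right)=\lambda_{\max}\left(\rho_{\Phi^+,\Omega}\right)$ as an immediate consequence of the variational identity already derived in the proof of Lemma~\ref{F(lambda)>=00003Dlambdamax}, and then to diagonalize $\rho_{\Phi^+,\hat\Omega}$ by hand so as to read off the Schmidt spectrum of $|\psi'\rangle$.

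First I would invoke Eq.~(\ref{eq:F-rhhopsilambda}): for every $|\psi\rangle\in\mathbb{C}^d\otimes\mathbb{C}^d$ one has $\mathbb{F}\left(\rho_{\psi,\Omega}\right)=\langle\psi|\rho_{\Phi^+,\hat\Omega}|\psi\rangle$. As a Rayleigh quotient this is maximized over unit vectors at a top eigenvector of $\rho_{\Phi^+,\hat\Omega}$, with value $\lambda_{\max}\left(\rho_{\Phi^+,\hat\Omega}\right)$; since $|\psi'\rangle$ is by hypothesis such an eigenvector, $\mathbb{F}\left(\rho_{\psi',\Omega}\right)=\lambda_{\max}\left(\rho_{\Phi^+,\hat\Omega}\right)$. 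Using the identity $\lambda_{\max}\left(\rho_{\Phi^+,\hat\Omega}\right)=\lambda_{\max}\left(\rho_{\Phi^+,\Omega}\right)$ (established in \cite{PBG-2014} for any $d$-dimensional channel, and also confirmed by the explicit computation below) then yields the first assertion.

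Next I would compute $\rho_{\Phi^+,\hat\Omega}$ directly. The dual map $\hat\Omega$ is described by the Kraus operators $\{A_i^\dagger\}$; here $A_0={\rm diag}\left(1,x_1,\dots,x_{d-1}\right)$ is Hermitian, while from (\ref{KraussOmega}) we have $A_m=\sqrt{1-x_m^2}\,|0\rangle\langle m|$, hence $A_m^\dagger=\sqrt{1-x_m^2}\,|m\rangle\langle 0|$, for $m=1,\dots,d-1$. Therefore $\left(\mathcal{I}\otimes A_0^\dagger\right)|\Phi^+\rangle=|\phi_0\rangle$ as in (\ref{phi0}), whereas $\left(\mathcal{I}\otimes A_m^\dagger\right)|\Phi^+\rangle=\frac{1}{\sqrt d}\sqrt{1-x_m^2}\,|0\rangle|m\rangle$, so that
\[
\rho_{\Phi^+,\hat\Omega}=|\phi_0\rangle\langle\phi_0|+\frac{1}{d}\sum_{m=1}^{d-1}\left(1-x_m^2\right)|0m\rangle\langle 0m|.
\]
This operator is block diagonal with respect to the orthogonal splitting ${\rm span}\{|ii\rangle:i=0,\dots,d-1\}\oplus{\rm span}\{|0m\rangle:m=1,\dots,d-1\}$: the first block is the rank-one operator $\frac{1}{d}|v\rangle\langle v|$ with $|v\rangle=|00\rangle+\sum_{i=1}^{d-1}x_i|ii\rangle$ and nonzero eigenvalue $\frac{1}{d}\left(1+\sum_{i=1}^{d-1}x_i^2\right)$, while the second block is diagonal with eigenvalues $\frac{1-x_m^2}{d}$. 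Since $0<x_m<1$, we have $1+\sum_i x_i^2>1>1-x_m^2$, so the largest eigenvalue is $\frac{1}{d}\left(1+\sum_i x_i^2\right)$ --- in agreement with (\ref{lambdamax}) --- it is simple, and the corresponding normalized eigenvector is
\[
|\psi'\rangle=\frac{1}{\sqrt{1+\sum_{i=1}^{d-1}x_i^2}}\left(|00\rangle+\sum_{i=1}^{d-1}x_i|ii\rangle\right).
\]

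Finally, the Schmidt coefficients of $|\psi'\rangle$ are $\left(1,x_1,\dots,x_{d-1}\right)/\sqrt{1+\sum_i x_i^2}$, and since each $x_i<1$ they cannot all be equal (the first strictly dominates the rest), so $|\psi'\rangle$ is not maximally entangled, which completes the lemma. I expect the only step needing a moment's care to be the verification that $\lambda_{\max}\left(\rho_{\Phi^+,\hat\Omega}\right)$ is contributed by the ``coherence'' block rather than by the $|0m\rangle$ block --- but this is immediate from $x_m<1$ --- so the result follows essentially by direct computation once the action of $\hat\Omega$ on $|\Phi^+\rangle$ has been written out.
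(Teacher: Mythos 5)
Your proof is correct. The first half --- obtaining $\mathbb{F}\left(\rho_{\psi^{\prime},\Omega}\right)=\lambda_{\max}\left(\rho_{\Phi^{+},\hat{\Omega}}\right)=\lambda_{\max}\left(\rho_{\Phi^{+},\Omega}\right)$ from Eq.~(\ref{eq:F-rhhopsilambda}) and the duality identity of \cite{PBG-2014} --- is exactly the paper's argument. Where you genuinely diverge is in the second half. The paper never writes $\left|\psi^{\prime}\right\rangle$ down: it deduces non-maximality indirectly, by combining Lemma \ref{lambdamax>negativity} with the bound (\ref{F*<=00003Dnegativity}) to get $\mathbb{F}\left(\rho_{\psi^{\prime},\Omega}\right)=\lambda_{\max}\left(\rho_{\Phi^{+},\Omega}\right)>\mathbb{F}^{*}\left(\rho_{\Phi,\Omega}\right)\geq\mathbb{F}\left(\rho_{\Phi,\Omega}\right)$ for every maximally entangled $\left|\Phi\right\rangle$, so $\left|\psi^{\prime}\right\rangle$ cannot be one of them. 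You instead diagonalize $\rho_{\Phi^{+},\hat{\Omega}}$ explicitly; I have checked your computation ($A_{0}^{\dagger}=A_{0}$, $A_{m}^{\dagger}=\sqrt{1-x_{m}^{2}}\left|m\right\rangle\left\langle 0\right|$, the block decomposition, and the comparison $1+\sum_{i}x_{i}^{2}>1-x_{m}^{2}$) and it is right, yielding $\left|\psi^{\prime}\right\rangle\propto\left|00\right\rangle+\sum_{i=1}^{d-1}x_{i}\left|ii\right\rangle$ with manifestly unequal Schmidt coefficients since each $x_{i}<1$. Your route costs a direct calculation but buys more: an explicit closed form for $\left|\psi^{\prime}\right\rangle$ and its Schmidt spectrum, a self-contained verification of $\lambda_{\max}\left(\rho_{\Phi^{+},\hat{\Omega}}\right)=\lambda_{\max}\left(\rho_{\Phi^{+},\Omega}\right)$ for this channel without invoking \cite{PBG-2014}, and the observation that the top eigenvalue is simple. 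The paper's indirect route avoids any computation of the eigenvector but leans on the negativity estimate of Lemma \ref{lambdamax>negativity} and on the general inequality (\ref{F*<=00003Dnegativity}); note that as a by-product it establishes the strictly stronger statement $\mathbb{F}\left(\rho_{\psi^{\prime},\Omega}\right)>\mathbb{F}^{*}\left(\rho_{\Phi,\Omega}\right)$ (beating maximally entangled inputs even after optimal TP-LOCC post-processing), which is what the Theorem actually needs, whereas your argument as stated only rules out $\left|\psi^{\prime}\right\rangle$ itself being maximally entangled. For the lemma as literally stated, both arguments are complete.
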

\begin{proof}
From Eq.$\,$(\ref{eq:F-rhhopsilambda}) we know that for any pure
state $\left|\psi\right\rangle \in\mathbb{C}^{d}\otimes\mathbb{C}^{d}$,
$\mathbb{F}\left(\rho_{\psi,\Omega}\right)=\left\langle \psi\left|\rho_{\Phi^{+},\hat{\Omega}}\right|\psi\right\rangle $.
As $\left|\psi^{\prime}\right\rangle $ is the eigenvector corresponding
to the eigenvalue $\lambda_{{\rm max}}\left(\rho_{\Phi^{+},\hat{\Omega}}\right)$,
this means, 
\[
\lambda_{{\rm max}}\left(\rho_{\Phi^{+},\hat{\Omega}}\right)=\left\langle \psi^{\prime}\left|\rho_{\Phi^{+},\hat{\Omega}}\right|\psi^{\prime}\right\rangle =\mathbb{F}\left(\rho_{\psi^{\prime},\Omega}\right).
\]
Using the identity $\lambda_{\max}\left(\rho_{\Phi^{+},\hat{\Lambda}}\right)=\lambda_{\max}\left(\rho_{\Phi^{+},\Lambda}\right)$
\cite{PBG-2014} for any quantum channel $\Lambda$, we therefore
have 
\begin{eqnarray*}
\lambda_{\max}\left(\rho_{\Phi^{+},\Omega}\right) & = & \mathbb{F}\left(\rho_{\psi^{\prime},\Omega}\right).
\end{eqnarray*}
On the other hand we have already shown that $\lambda_{\max}\left(\rho_{\Phi^{+},\Omega}\right)>\mathbb{F}^{*}\left(\rho_{\Phi,\Omega}\right)$.
Therefore, $\mathbb{F}\left(\rho_{\psi^{\prime},\Omega}\right)>\mathbb{F}^{*}\left(\rho_{\Phi,\Omega}\right)$
for any maximally entangled state $\left|\Phi\right\rangle $ from
which we conclude that $\left|\psi^{\prime}\right\rangle $ is not
a maximally entangled state. 
\end{proof}
Inequalities (\ref{intermediate-inequality}) and Lemma \ref{existence of psiprime}
conclude the proof of the theorem.

\section{Conclusions}

For any given $d$-dimensional quantum channel $\Lambda$ with $d\geq2$,
its one-shot optimal singlet fraction $\mathbb{F}\left(\Lambda\right)$
defines the maximum singlet fraction achievable for entangled states
established between two remote observers for every use of the channel.
Recall that 
\begin{eqnarray}
\mathbb{F}\left(\Lambda\right) & = & \mathbb{F}^{*}\left(\rho_{\psi_{{\rm opt}},\Lambda}\right)=\mathbb{F}\left(\rho_{\psi_{{\rm opt}},\Lambda}^{\prime}\right).\label{optimal-purity-one-1}
\end{eqnarray}
Thus, $\mathbb{F}\left(\Lambda\right)$ quantifies how useful a channel
$\Lambda$ is either for direct applications for quantum information
processing tasks, e.g. teleportation \cite{Horodecki-general-teleportation-1999}
or for entanglement distillation where the yield depends upon the
singlet fraction of the noisy states. 

For qubit channels $\mathbb{F}\left(\Lambda\right)$ can be exactly
computed and the relevant questions have been satisfactorily answered
before \cite{PBG-2014}. The results, however, point towards two counter-intuitive
features. Foremost among them is that $\left|\psi_{{\rm opt}}\right\rangle $
is nonmaximally entangled if and only if the channel is nonunital.
And the next is, for nonunital qubit channels maximum achievable entanglement
negativity using a maximally entangled state cannot be more than what
is attained by sending $\left|\psi_{{\rm opt}}\right\rangle $. In
fact, for an amplitude damping channel (a nonunital channel) it was
further shown that optimal negativity is obtained only by a nonmaximally
entangled state. 

Motivated by the above results we wanted to understand how well the
results and observations made for qubit channels hold in higher dimensions.
We presented a family of qudit channels $\Omega$ in all finite dimensions
$d\geq3$ for which we proved properties similar to nonunital qubit
channels. In particular, we proved that one-shot optimal singlet fraction
and negativity are attained only using appropriate nonmaximally entangled
states. However, we also find that a generalized version of the formula
that allows us to compute the optimal singlet fraction exactly for
qubit channels does not hold in general in higher dimensions. 

While a lot of results had been obtained characterizing quantum channels,
we believe that much less is understood when it comes to characterizing
quantum channels through the notions of optimal singlet fraction and
entanglement measures. In higher dimensions almost every interesting
question is left open, and probably a good way to address them is
to solve the questions for specific channels of interest e.g. a depolarizing
channel. Such results can provide us with useful insights. Another
paradigm within which where we can ask similar questions is entanglement
distribution in the presence of preshared correlations. 
\begin{acknowledgments}
Part of this work was completed when Rajarshi Pal was a long term
visitor at Bose Institute during May 2015- March 2016. S.B. is supported
in part by SERB (Science and Engineering Research Board), DST, Govt.
of India--Project No. EMR/2015/002373. \end{acknowledgments}

\end{document}